\definecolor{darkblue}{RGB}{0,0,127} 
\definecolor{darkgreen}{RGB}{0,150,0}
\newcommand{\ketbra}[2]{\ket{#1}\!\!\bra{#2}}
\newcommand{\proj}[1]{\ketbra{#1}{#1}}
\newcommand{\qexpect}[1]{\ensuremath{\left\langle#1\right\rangle}}
\DeclareMathOperator*{\argmin}{argmin}
\begin{document}

\title{Maximum likelihood quantum state tomography is inadmissible}

\author{Christopher Ferrie}
\email{csferrie@gmail.com}
\homepage{https://www.csferrie.com}

\affiliation{University of Technology Sydney,
Centre for Quantum Software and Information,
Ultimo NSW 2007, Australia}


\author{Robin Blume-Kohout}
\affiliation{Center for Computing Research, Sandia National Laboratories, Albuquerque, New Mexico, 87185}

\date{\today}

\begin{abstract}
Maximum likelihood estimation (MLE) is the most common approach to quantum state tomography.  In this letter, we investigate whether it is also \emph{optimal} in any sense.  We show that MLE is an \emph{inadmissible} estimator for most of the commonly used metrics of accuracy---i.e., some other estimator is more accurate for every true state.  MLE is inadmissible for fidelity, mean squared error (squared Hilbert-Schmidt distance), and relative entropy.  We prove that almost \emph{any} estimator that can report both pure states and mixed states is inadmissible.  This includes MLE, compressed sensing (nuclear-norm regularized) estimators, and constrained least squares.  We provide simple examples to illustrate why reporting pure states is suboptimal even when the true state is itself pure, and why ``hedging'' away from pure states generically improves performance.
\end{abstract}

\maketitle

 
\textbf{Introduction:} Quantum state tomography means reconstructing---or, more precisely, \emph{estimating}---a quantum state $\rho$, using data $D$ that are the results of measurements on many copies of $\rho$.  The question ``When and how can a state be estimated accurately?''~goes back at least to Pauli \cite{pauli_handbuch_1933}, but has been intensively studied since the 1990s because of its applications in quantum information science.  Reliable tomography is now fairly routine, especially in cases where there is some engineered target state in mind.  Tomography scales poorly with system size---it requires resources that scale exponentially with the number of degrees of freedom---but remains a useful and essential ingredient in engineering small quantum devices.  Since about 2000, the most common analysis method for tomographic data has been \emph{maximum likelihood estimation} (MLE) \cite{hradil_quantum-state_1997,JamesPRA2001}. The MLE estimate is the state $\hat{\rho}$ that maximizes the \emph{likelihood function}, which is defined as the probability of the actually-observed data, given $\rho$:
\begin{equation}\label{eq:MLE}
\mathcal{L}(\rho) \equiv \mathrm{Pr}(D|\rho);\ \ \hat\rho = \mathrm{argmax}[\mathcal{L}(\rho)]
\end{equation}
MLE is convenient, ubiquitous, and grounded in a well-studied statistical principle.  Sometimes the simplest solutions are also the best, and since MLE has received so much attention, it is reasonable to wonder whether it is also optimally accurate in some sense.  Should it be preferred over other estimators \emph{in principle}, not just as a convenient tool?  The answer turns out to be ``no''.

\begin{figure}[ht]
	\centering
	\includegraphics[width=1.0\columnwidth]{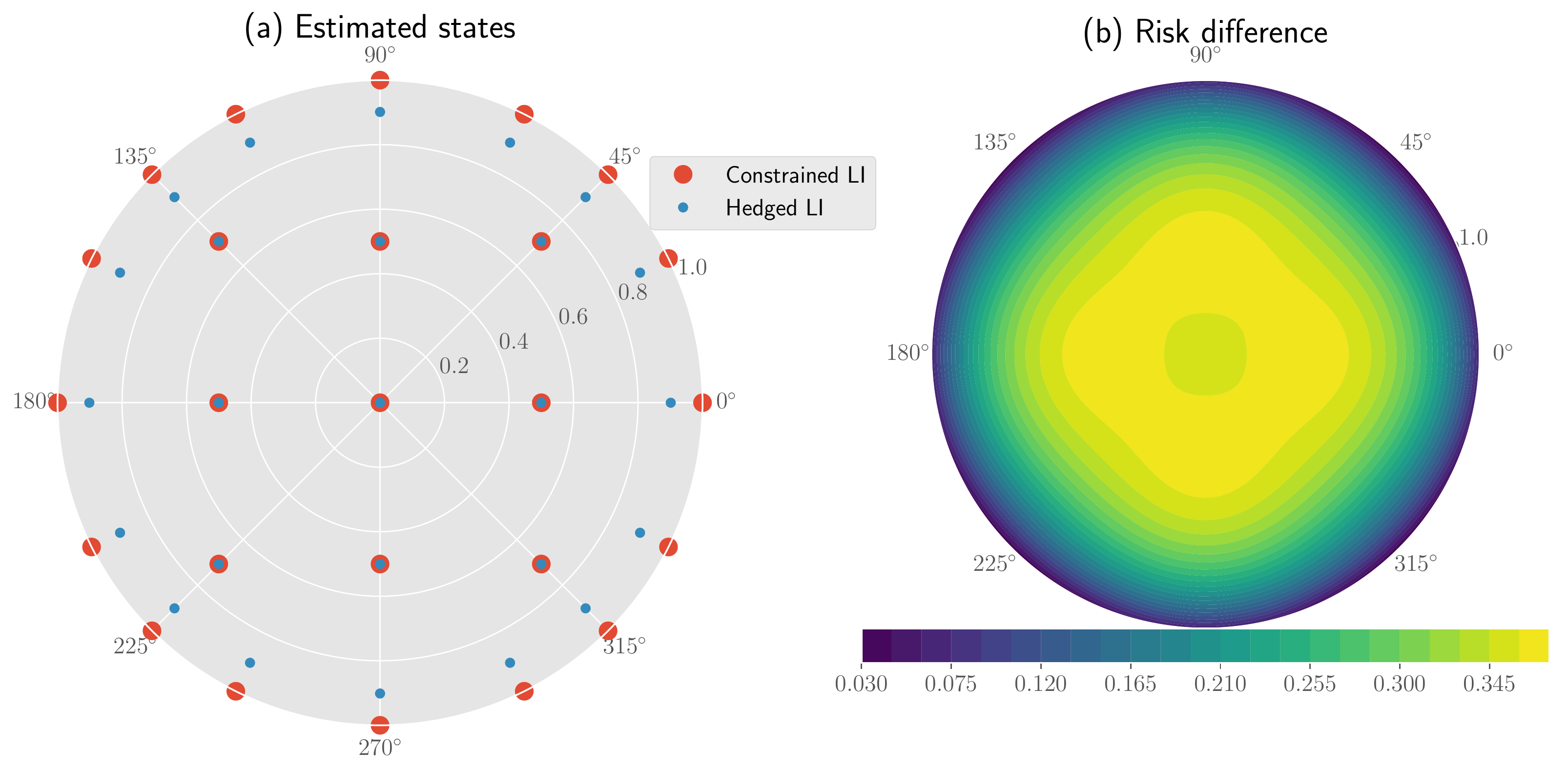}
	\caption{
	\label{fig:rebit}
\textbf{Estimators and Hilbert-Schmidt risk for $N=4$ Pauli measurements on a rebit.} Panel (a) shows the estimates reported by the constrained least-squares estimator (Eq. \ref{eq:li}) and its hedged counterpart (Eq. \ref{eq:hli}). All the hedged estimates lie \emph{inside} the state space boundary.  Panel (b) shows the difference in the Hilbert-Schmidt risk between the constrained least-squares and hedged estimators.  The constrained least-squares estimator is outperformed everywhere, which demonstrates its inadmissibility.}
\end{figure}

The MLE of a quantum state depends, of course, on the observed data $D$.  For some datasets, the estimate can be a pure state.  This seems fairly innocuous.  Other estimators, including nuclear-norm regularization (``compressed sensing'') \cite{gross_quantum_2010,flammia_quantum_2012,RiofrioNC2017} and constrained least-squares (``positivized linear inversion'') \cite{kaznady_numerical_2009,smolin_efficient_2012,qi_quantum_2013}, share the same property.  But reporting pure estimates turns out to be dangerous.  Suppose that $\hat\rho(D)$ is an estimator, and there are two datasets $D_1$ and $D_2$ that are consistent with the same states---meaning that the states $\rho$ for which $\mathrm{Pr}(D_1|\rho)=0$ and $\mathrm{Pr}(D_2|\rho)=0$ are the same---and one of the estimates $\hat\rho(D_1)$ and $\hat\rho(D_2)$ is pure while the other is mixed.  (In most situations, this describes MLE and the other estimators mentioned).  Then, as we will show, this estimator is suboptimal in a very strong sense: it is \emph{inadmissible} as measured by the infidelity error metric.  Admissibility is a standard concept from statistical decision theory \cite{HinckleyCox79}, defined precisely below.  But in essence a inadmissible estimator is \emph{strictly} less accurate than some other estimator.  It may be used for convenience, but is generally excluded from discussions of optimality.

Admissibility depends on the error metric used to define ``accuracy''.  Infidelity (mentioned above) is not the only metric.  Other error metrics used to evaluate quantum tomography include mean-squared error (squared Hilbert-Schmidt distance) and quantum relative entropy. MLE and the other estimators mentioned are also inadmissible according to these metrics.  We will show that if an estimator can report two distinct pure states, at least one of which is not ruled out by the dataset that produced the other, then it is inadmissible for relative entropy and mean-squared error.

Our main goal in the rest of this Letter is to prove these statements.  We then discuss their implications, and give some simple examples to illustrate them.  We begin with some necessary background information.

\noindent\textbf{Background:} Tomography is done by measuring $N$ identically prepared systems and recording the empirical frequencies of the measurement outcomes \footnote{Usually, the systems are split into subsets, and each subset is measured a different way.  But because estimation requires only the observed outcomes (not how they were obtained), those various measurements can be conglomerated into a single POVM $\{E_k\}$ without loss of generality.}.  Measuring a quantum system yields one of several outcomes that we label by $k$, each of which can be represented by a positive operator $E_k$, with the requirement that $\sum_k E_k = \id$.  The outcomes' probabilities depend on the system's state $\rho$ according to Born's rule:
\begin{equation}\label{eq:born}
    \Pr(k|\rho) = \Tr(\rho E_k) := p_k.
\end{equation}
As $N\to\infty$, each empirical (observed) frequency will approach its underlying probability almost certainly:
\begin{equation}\label{eq:freq}
    f_k=\frac{n_k}{N} \to p_k \text{ as } N\to\infty,
\end{equation}
where $n_k$ is the number of times that ``$k$'' was observed.

The oldest and simplest tomographic estimator is \emph{linear inversion}, which simply inverts the system of linear equations implied by Eqs. \autoref{eq:born}-\autoref{eq:freq} to determine $\rho$ (up to finite-$N$ fluctuations).  \emph{Least squares} estimation generalizes linear inversion to the situation where these equations are overdetermined.  Both are subject to a well-known pathology:  the estimate $\hat\rho$ may have negative eigenvalues (and therefore not be a quantum state at all).  This can be patched by truncating $\hat\rho$ and renormalizing \cite{kaznady_numerical_2009}, or more elegantly by \emph{constrained least squares} \cite{smolin_efficient_2012}, which restricts the least-squares optimization to the convex set of valid quantum states defined by $\Tr\rho=1$ and $\rho\geq0$.  These estimators can be seen as simplified approximations to MLE \cite{hradil_quantum-state_1997}, which is motivated by a vast literature in classical statistics and defined by Eq. \autoref{eq:MLE}.  

For each of these estimators, certain datasets produce estimates that are pure states \cite{blume-kohout_optimal_2010}.  This follows from some fairly simple geometry.  For any given true state $\rho$, the empirical frequencies $f_k$ of the data are random variables that fluctuate around (but do not generally coincide with) the underlying probabilities $p_k$.  So the linear inversion estimate is drawn randomly from a roughly spherical distribution whose mean is $\rho$.  All the estimation procedures defined above, including MLE, seek the valid quantum state that is ``closest'' to the linear inversion estimate by some metric.  If $\rho$ is a pure state, then the cloud of possible linear inversion estimates must englobe it, and some of them will be closer to a pure state than any mixed state \cite{ScholtenNJP2018}.

It is worth noting that pure estimates are not exceptional or rare.  For example, if the true state is almost any pure qubit state $\rho=\proj{\psi}$, then as $N\to\infty$, the MLE or constrained least-squares estimate will be pure with probability $1/2$ \cite{ScholtenNJP2018}.  More recently, \emph{compressed sensing} estimators based on the theory of matrix completion have been proposed, which actually prefer low-rank estimates \cite{flammia_quantum_2012}.  So pure estimates are not only common but, in some contexts, even sought-after.  

To investigate whether they are desirable---i.e., whether reporting pure states (sometimes) is a good strategy for improving an estimator's accuracy or performance---we turn to quantum statistical decision theory \cite{helstrom_quantum_1976,holevo_probabilistic_1982}.  ``Inaccuracy'' corresponds to the cost of reporting $\hat\rho$ when the true state is $\rho$, and is quantified by a \emph{loss function} $L(\rho,\hat\rho)$.  Estimators are maps from data to estimates $\hat\rho$.  Each true state defines a distribution over data, $Pr(D|\rho)$.  For each estimator and each true state $\rho$, the expected loss (average over data) is called that estimator's \emph{risk} at $\rho$.
\begin{equation}\label{eq:risk}
    R(\rho,\hat\rho) = \sum_{D}\Pr(D|\rho) L(\rho,\hat\rho(D)).
\end{equation}

There is no universal agreement on what loss function to use in quantum state tomography, but some of the most common choices are:
\begin{enumerate}
\item mean squared error (squared Hilbert-Schmidt distance) $H(\rho,\hat\rho) = \Tr(\rho-\hat\rho)^2$;
\item quantum relative entropy $D(\rho,\hat\rho) = \Tr\rho(\log\rho-\log\hat\rho)$; and,
\item infidelity $F(\rho,\hat\rho) = 1-\Tr\sqrt{\sqrt{\rho}\hat\rho\sqrt{\rho}}$.
\end{enumerate}
Risk depends on the true state $\rho$.  The best estimator would have low risk on \emph{all} states, but it is easy to see that no estimator can be ``best'' for all states.  For any $\rho_0$, the lowest possible risk is achieved by the estimator that always reports $\rho_0$.  Its risk is zero at $\rho=\rho_0$, but catastrophic almost everywhere else.  So there's no ``best everywhere'' estimator, and tradeoffs are necessary.  

Most pairs of estimators are \emph{incomparable}, meaning that each outperforms the other somewhere. But if one estimator is matched or outperformed by another one everywhere, we say that it is \emph{dominated}.  It is hard to recommend such an estimator!  An estimator is \emph{admissible} only if it is not dominated.  An \emph{inadmissible} estimator is strictly worse than some other estimator. 

\noindent\textbf{Main result}:  We will now prove that an estimator of quantum states is inadmissible---for \emph{all} the loss functions given above---if it satisfies the following condition: \emph{for data sets that are inconsistent with the same set of quantum states, the estimator can report at least one mixed state and also can report at least two pure states, one of which is not ruled out by the other's dataset}.  This condition deserves a bit of explanation.  Essentially, there are a couple of experimental designs for which MLE (and similar estimators) are not inadmissible.  One is the ``classical'' case where all the $\{E_k\}$ commute.  Another is the extreme small-sample limit, where (for example) the data comprise a single measurement of each single-qubit Pauli basis.  This condition (made precise in Theorems 1-2) concisely excludes all the relevant edge cases, but is generally satisfied by MLE (more on this later).

The proof makes use of Wald's complete class theorem \cite{wald_essentially_1947}.  It states that if an estimator $\hat\rho$ is admissible, then under weak regularity conditions, there exists a distribution $\Pr(\rho)d\rho$ such that
\begin{equation}
    \hat\rho = \argmin_{\hat\sigma} \int R(\rho,\hat\sigma)\Pr(\rho)d\rho.
\end{equation}
For this distribution, $\hat\rho$ minimizes the posterior \emph{Bayes risk},
\begin{equation}
    B(\hat\rho) = \int \Pr(\rho|D)L(\rho,\hat\rho(D))d\rho,
\end{equation}
and is called a \emph{Bayes estimator} for that posterior.  Wald's theorem establishes a duality between frequentist and Bayesian optimality, by saying that every admissible estimator is also a Bayes estimator that minimizes \emph{average} risk for some prior---and thus that if $\hat\rho$ is not a Bayes estimator for \emph{any} distribution, it cannot be admissible.  

The minimal assumptions on the estimator are slightly different for fidelity than for relative entropy and mean squared error, so we prove the general result in two steps.

\begin{theorem}
Let $\hat\rho$ be an estimator of quantum states. Suppose there exist two datasets $D_1,D_2$ for which $\sigma_1 = \hat\rho(D_1)$ and $\sigma_2 = \hat\rho(D_2)$ are both pure states, and at least one of $Pr(D_1|\sigma_2)$ or $Pr(D_2|\sigma_1)$ is nonzero.  Then $\hat\rho$ is inadmissible for any Bregman divergence.
\end{theorem}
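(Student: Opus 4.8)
The plan is to argue by contraposition through Wald's complete class theorem: I will assume $\hat\rho$ is admissible, so that it must be a Bayes estimator for some prior $\Pr(\rho)\,d\rho$, and derive a contradiction. The one structural fact I need about Bregman divergences is that the minimizer of the posterior Bayes risk is the \emph{posterior mean}. For $L(\rho,\hat\sigma)=D_\phi(\rho,\hat\sigma)$ the stationarity condition for $\min_{\hat\sigma}\int \Pr(\rho|D)\,D_\phi(\rho,\hat\sigma)\,d\rho$ collapses, the $\nabla\phi$ terms cancelling, to $\int(\rho-\hat\sigma)\,\Pr(\rho|D)\,d\rho=0$, so that $\hat\rho(D)=\int \rho\,\Pr(\rho|D)\,d\rho$ on every dataset $D$ of positive marginal probability. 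Checking that the listed losses are Bregman divergences---Hilbert--Schmidt from $\phi(\rho)=\Tr\rho^2$, relative entropy from $\phi(\rho)=\Tr\rho\log\rho$---is routine and I would relegate it to a remark (note that infidelity is not of this form, which is why fidelity needs the separate theorem).

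The crux is then a convex-geometry observation: a pure state is an extreme point of the convex, compact set of density operators, and an extreme point is the barycenter of a probability measure only if that measure is the point mass at it. Hence whenever the posterior mean $\hat\rho(D)$ is pure, the posterior $\Pr(\rho|D)$ must be the Dirac measure at $\hat\rho(D)$; equivalently, the prior assigns \emph{all} of its mass among the states consistent with $D$ (those with $\Pr(D|\rho)>0$) to the single point $\hat\rho(D)$. I would then take, without loss of generality, $\Pr(D_2|\sigma_1)\neq 0$, so that $\sigma_1$ is consistent with $D_2$. The $D_1$ step forces the prior to carry an atom at $\sigma_1$, i.e. $\Pr(\{\sigma_1\})>0$, while the $D_2$ step forces the prior to give zero weight to every $D_2$-consistent state other than $\sigma_2$; since $\sigma_1\neq\sigma_2$ is such a state, $\Pr(\{\sigma_1\})=0$. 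This contradiction shows no such prior exists, so $\hat\rho$ is inadmissible.

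The main obstacle is measure-theoretic bookkeeping around datasets of zero marginal probability, on which Wald's theorem is silent and the posterior-mean characterization does not apply. A clean partial fix I would record is that the $D_2$ half of the argument delivers $\Pr(\{\sigma_1\})=0$ \emph{regardless} of whether $\Pr(D_2)$ is positive: if $\Pr(D_2)=\int\Pr(D_2|\rho)\Pr(\rho)\,d\rho=0$ then already $\Pr(D_2|\sigma_1)\,\Pr(\{\sigma_1\})=0$, forcing $\Pr(\{\sigma_1\})=0$. So the genuine gap narrows to guaranteeing $\Pr(D_1)>0$, i.e. that the Wald prior actually charges some state consistent with $D_1$. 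I expect to close this using the regularity hypotheses under which the complete class theorem holds in this continuous, operator-valued parameter space---taking the prior to have full support, or passing to the generalized-Bayes limiting form---together with the standing assumptions that $\sigma_1\neq\sigma_2$ and that $\hat\rho(D_1)=\sigma_1$ is itself consistent with $D_1$. Verifying these regularity conditions, rather than the geometry, is where the real work lies.
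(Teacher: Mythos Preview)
Your proof is correct and follows essentially the same route as the paper: contrapose through Wald's complete class theorem, use that the Bayes estimator for a Bregman divergence is the posterior mean, observe that a pure posterior mean forces a Dirac posterior, and derive a contradiction on the prior's support. The paper phrases the contradiction as ``the prior must support both $\sigma_1,\sigma_2$ \emph{and} both likelihoods must vanish on the other's estimate,'' which is the same content as your atom-at-$\sigma_1$ argument; your measure-theoretic worries about $\Pr(D_1)>0$ are more scrupulous than the paper, which simply invokes Wald ``under weak regularity conditions'' and leaves it there.
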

\noindent\emph{Remark}: \emph{Bregman divergences} are a class of risk functions derived from convex ``entropy'' functions, including mean squared error and quantum relative entropy.  The Bayes estimator for any Bregman divergence is the mean of the posterior distribution \cite{blume-kohout_accurate_2006}.

\begin{proof}
Suppose $\hat\rho$ \emph{is} admissible for a Bregman divergence.  Then each of the $\sigma_i$ must be the mean of some posterior distribution that is given by the product of a likelihood and a prior, $Pr(\rho|D_i) = Pr(D_i|\rho)Pr(\rho)$.  But a distribution whose mean is a pure state can only be a point measure \emph{on} that pure state, so $Pr(\rho|D_i) = \delta(\rho-\sigma_i)$.  So the prior $Pr(\rho)$ must have support on both $\sigma_1$ and $\sigma_2$, \emph{and} each of the the likelihoods $Pr(D_i|\rho)$ must rule out the other state, i.e. $Pr(D_1|\sigma_2) = Pr(D_2|\sigma_1) = 0$.  This contradicts the assumption of the Theorem, so $\hat\rho$ is not admissible for any Bregman divergence.
\end{proof}

Our second theorem requires two preliminary lemmas.
\begin{lemma}
If a prior has support only on pure states, then the Bayes estimator for infidelity risk is always pure (no matter what data were observed).
\end{lemma}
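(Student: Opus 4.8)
The plan is to turn the infidelity Bayes risk into a linear functional of the estimate by exploiting the purity of the true states. First I would observe that a prior supported only on pure states produces a posterior $\Pr(\rho|D)\propto \Pr(D|\rho)\Pr(\rho)$ that is \emph{also} supported only on pure states, so I may write it as a distribution over rank-one projectors $\proj{\psi}$. The key structural fact is that fidelity to a pure target collapses: since $\sqrt{\proj{\psi}}=\proj{\psi}$, the operator $\sqrt{\proj{\psi}}\,\hat\sigma\,\sqrt{\proj{\psi}}$ is rank one, and the fidelity of $\hat\sigma$ to $\proj{\psi}$ is governed entirely by the expectation value $\braopket{\psi}{\hat\sigma}{\psi}$, which is \emph{linear} in $\hat\sigma$. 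On the support of the posterior, therefore, the loss is affine in $\hat\sigma$; this linearization is precisely what singles fidelity out here.

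Next I would collapse the averaged risk into a single operator inner product. Because the trace is linear, the posterior-averaged fidelity is
\[
  \int \Pr(\psi|D)\,\braopket{\psi}{\hat\sigma}{\psi}\,d\psi \;=\; \Tr(\hat\sigma\,\bar\rho),
\]
where $\bar\rho \equiv \int \Pr(\psi|D)\,\proj{\psi}\,d\psi$ is the posterior mean and is the \emph{only} quantity through which the data $D$ enter. Minimizing the infidelity Bayes risk is thus equivalent to maximizing the linear functional $\Tr(\hat\sigma\,\bar\rho)$ over the convex, compact set of density matrices.

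The final step is elementary convex optimization: a linear functional on a compact convex set is maximized at an extreme point, and the extreme points of the set of density matrices are precisely the pure states. Concretely, $\max_{\hat\sigma}\Tr(\hat\sigma\,\bar\rho)$ equals the largest eigenvalue of $\bar\rho$ and is attained by the projector onto a top eigenvector. Since this holds for every $\bar\rho$, a pure state is always a Bayes estimator, independent of the observed data, which is exactly the assertion of the lemma.

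The step I expect to be the real obstacle is not the algebra but the \emph{degenerate} case, where the largest eigenvalue of $\bar\rho$ is repeated: then the maximizer is not unique, and mixed states supported in the top eigenspace are Bayes-optimal as well. I would resolve this by pointing out that a pure maximizer nevertheless always exists (any unit vector in the top eigenspace), so ``the Bayes estimator is pure'' holds as stated, while noting that uniqueness of a \emph{pure} estimate requires a nondegenerate posterior mean. A minor technical loose end is the measure-theoretic justification that $\bar\rho$ is well defined and that the trace commutes with the posterior integral, which follows from the compactness of state space and boundedness of the integrand.
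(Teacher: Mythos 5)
Your overall strategy---push the purity of the posterior's support through the loss function to get a linear functional $\Tr(\hat\sigma\bar\rho)$ of the estimate, then maximize over the convex body of density matrices and land on an extreme point---is exactly the argument the paper is gesturing at when it says the lemma ``follows from the fact that $F(\rho,\sigma)$ is bilinear when either argument is pure'' (the paper itself only cites the literature). Your treatment of the degenerate top eigenspace and of the interchange of trace and integral is also fine.

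There is, however, a genuine gap at the linearization step, and it is precisely the step you flagged as unproblematic. You write that the fidelity to $\proj{\psi}$ is ``governed entirely by'' $\braopket{\psi}{\hat\sigma}{\psi}$ and \emph{therefore} the loss is affine in $\hat\sigma$. That inference is not valid: with the infidelity as the paper actually defines it, $L = 1-\Tr\sqrt{\sqrt{\rho}\,\hat\sigma\sqrt{\rho}}$, the pure-$\rho$ case gives
\begin{equation*}
L\bigl(\proj{\psi},\hat\sigma\bigr) \;=\; 1-\sqrt{\braopket{\psi}{\hat\sigma}{\psi}},
\end{equation*}
which is a strictly concave---not affine---function of $\hat\sigma$. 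Minimizing the Bayes risk then means \emph{maximizing} the concave functional $\int \Pr(\psi|D)\sqrt{\braopket{\psi}{\hat\sigma}{\psi}}\,d\psi$, whose maximizer need not be an extreme point, so the convex-geometry conclusion does not follow. Indeed the lemma is \emph{false} in that convention: take the posterior uniform on the three qubit ``trine'' states (Bloch vectors at $120^\circ$); every pure estimate achieves $\sum_i\sqrt{p_i}\le 2$, while the maximally mixed state achieves $3/\sqrt{2}>2$, so the Bayes estimator is maximally mixed. Your argument is correct only for the squared-fidelity (Jozsa) convention, where the loss is $1-\braopket{\psi}{\hat\sigma}{\psi}$ and genuinely affine---which is evidently what the paper's ``bilinear'' remark presupposes, despite the square root appearing in its displayed definition. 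To close the gap you should state explicitly that the loss being used is one minus the \emph{squared} fidelity (or the linear overlap), since the truth of the lemma, not merely its proof, turns on that choice.
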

\begin{proof}
This has been proven multiple times in the literature (e.g. \cite{blume-kohout_accurate_2006,kueng_near-optimal_2015}).  It follows from the fact that $F(\rho,\sigma)$ is bilinear when either argument is pure.
\end{proof}
\begin{lemma}
If a posterior distribution $\pi$ has support on at least one mixed (non-pure) state, then the Bayes estimator is never pure.
\end{lemma}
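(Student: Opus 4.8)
The plan is to exploit the non-smoothness of the matrix square root at rank-deficient operators. Recall that when $\hat\rho=\proj{\psi}$ is pure one has $\Tr\sqrt{\sqrt\rho\,\hat\rho\,\sqrt\rho}=\sqrt{\braopket{\psi}{\rho}{\psi}}$, so minimizing the posterior infidelity risk is equivalent to \emph{maximizing} the expected root fidelity $G(\hat\rho)=\int \Tr\sqrt{\sqrt\rho\,\hat\rho\,\sqrt\rho}\,\pi(\rho)\,d\rho$. This $G$ is concave in $\hat\rho$ (Uhlmann's concavity of root fidelity), so the risk is convex and its minimizers form a convex set; but I do not really need this beyond motivation. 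I would argue by contradiction: assume the Bayes estimator is some pure state $\proj{\psi}$, and show that perturbing it toward the interior of state space strictly increases $G$, so $\proj{\psi}$ cannot be a maximizer.

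Concretely, I would introduce the family $\hat\rho_t=(1-t)\proj{\psi}+t\,\tau$ with $\tau$ a full-rank state (e.g. the maximally mixed state) and examine $G(\hat\rho_t)$ as $t\to0^+$. The essential observation is how each integrand behaves. For any \emph{mixed} $\rho$ in the support, $\sqrt\rho\,\proj{\psi}\,\sqrt\rho$ is rank one, so several eigenvalues of $\sqrt\rho\,\hat\rho_t\,\sqrt\rho$ vanish at $t=0$ and switch on at order $t$; taking the square root, these contribute a term of order $\sqrt t$, giving $\Tr\sqrt{\sqrt\rho\,\hat\rho_t\,\sqrt\rho}=\sqrt{\braopket{\psi}{\rho}{\psi}}+c_\rho\sqrt t+O(t)$ with $c_\rho>0$. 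By contrast, for a pure $\rho=\proj{\chi}$ with $\braopket{\chi}{\proj{\psi}}{\chi}\neq0$ the integrand is $\sqrt{(1-t)\braopket{\chi}{\proj{\psi}}{\chi}+t\,\braopket{\chi}{\tau}{\chi}}$, which is smooth with a finite derivative, while the remaining ($\chi\perp\psi$) pure states again contribute a favorable $+\sqrt t$ term. Integrating against the posterior, the singular $+\sqrt t$ contribution from the mixed states---which carry positive posterior weight by hypothesis---dominates the finite $O(t)$ corrections, so $G(\hat\rho_t)>G(\hat\rho_0)$ for all sufficiently small $t>0$. This contradicts optimality of $\proj{\psi}$, proving the Bayes estimator is not pure.

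I expect the main obstacle to be the rigorous $\sqrt t$ asymptotics: I must show that adding an $O(t)$ perturbation to a rank-deficient positive operator shifts $\Tr\sqrt{\cdot}$ upward by an amount of order $\sqrt t$ with a strictly positive coefficient $c_\rho$ for every mixed $\rho$. This is exactly the non-analyticity of $X\mapsto\Tr\sqrt X$ at the boundary of the positive cone, which I would control by degenerate perturbation theory, diagonalizing $\sqrt\rho\,\tau\,\sqrt\rho$ on the kernel of $\sqrt\rho\,\proj{\psi}\,\sqrt\rho$ to read off the emergent $O(t)$ eigenvalues. A secondary technical point is justifying the interchange of the integral with the $t\to0$ limit; since I only need the one-sided conclusion $G(\hat\rho_t)>G(\hat\rho_0)$, it suffices to lower-bound the mixed-state contribution (via Fatou's lemma, using positivity of all the $\sqrt t$ terms) rather than to evaluate the derivative exactly.
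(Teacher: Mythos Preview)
Your approach is correct and essentially the same as the paper's: both argue by contradiction, perturbing a hypothetical pure Bayes estimate toward the interior of state space and exploiting the non-smoothness of the root-fidelity functional at rank-deficient $\hat\rho$ to show the Bayes reward strictly increases. The paper's version is terser---it perturbs toward the normalized projector onto the support of a mixed state in $\pi$ and phrases the non-smoothness as a divergent coefficient in a formal first-order expansion rather than your more explicit $\sqrt{t}$ asymptotics---but the mechanism is identical.
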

\begin{proof}
Suppose $\hat\rho$ is a pure state.  Let $\mathcal{H}$ be the support of a mixed state upon which $\pi$ is supported, $d$ its dimension, and $\mathbb I$ the projector onto $\mathcal{H}$.  If we vary $\hat\rho$ slightly by extending it onto $\mathcal{H}$ to produce $\hat\rho' = \hat\rho + x (\mathbb I/d - \hat\rho)$, then the change in the Bayes reward is (with $\langle\cdot\rangle$ denoting the average over the posterior)
\begin{equation}
	B(\hat\rho') - B(\hat\rho) = \frac{x}2\left(\frac1d \Tr\left\langle\rho(\sqrt{\rho}\hat\rho\sqrt{\rho})^{-\frac12}\right\rangle - B(\hat\rho)\right).
\end{equation}
The second term [$-B(\hat\rho)$] is negative, but finite and bounded.  The first term, however, diverges to $+\infty$ when $\rho$ has support on some subspace that $\hat\rho$ does not.  So $\hat\rho'$ has a higher Bayes reward than $\hat\rho$, $\hat\rho$ is not the Bayes estimator, and therefore the Bayes estimator is not pure.
\end{proof}

\begin{theorem}
Let $\hat\rho$ be an estimator of quantum states.  Suppose there exist two datasets $D_0,D_1$ that are inconsistent with exactly the same set of states---i.e., for all $\rho$, $\Pr(D_0|\rho) = 0 \Leftrightarrow Pr(D_1|\rho)=0$---and $\hat\rho(D_0)$ is mixed and $\hat\rho(D_1)$ is pure.  Then $\hat\rho$ is inadmissible for infidelity loss.
\end{theorem}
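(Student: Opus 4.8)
The plan is to argue by contradiction, mirroring the structure of Theorem 1 but using the two preliminary lemmas in place of the posterior-mean characterization. First I would suppose that $\hat\rho$ \emph{is} admissible for infidelity loss. Wald's complete class theorem then supplies a single prior $\Pr(\rho)$ for which $\hat\rho$ is the Bayes estimator \emph{simultaneously} for every dataset---in particular, both $\hat\rho(D_0)$ and $\hat\rho(D_1)$ minimize the infidelity Bayes risk against the posteriors $\Pr(\rho|D_i) \propto \Pr(D_i|\rho)\Pr(\rho)$ built from this one common prior.

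The crucial observation is that the matching-support hypothesis, combined with the shared prior, forces the two posteriors to have \emph{identical} support. Indeed, the support of each posterior is the set of states on which both the likelihood $\Pr(D_i|\rho)$ and the prior are nonzero; since the hypothesis $\Pr(D_0|\rho)=0 \Leftrightarrow \Pr(D_1|\rho)=0$ asserts that the two likelihoods vanish on exactly the same states, intersecting each with the \emph{same} prior support yields one and the same set. Call it $S$.

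Now I would play the two lemmas against each other on $S$. Because $\hat\rho(D_1)$ is pure, the contrapositive of Lemma 2 tells us that the posterior $\Pr(\rho|D_1)$---and hence $S$---can contain no mixed state. On the other hand, because $\hat\rho(D_0)$ is mixed, Lemma 1 (applied to $\Pr(\rho|D_0)$ as the effective distribution against which that estimate is computed) forbids $S$ from being supported purely on pure states: a distribution on pure states alone would yield a pure Bayes estimate. These two conclusions contradict each other, so $\hat\rho$ cannot be admissible, and is therefore inadmissible for infidelity loss.

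The main obstacle is conceptual rather than computational: it lies in recognizing that admissibility, through Wald's theorem, pins \emph{both} estimates to the \emph{same} prior, which is exactly what converts a matching of \emph{likelihood} supports into a matching of \emph{posterior} supports. A secondary point requiring care is that Lemma 1 is phrased for priors while I need it for the posterior $\Pr(\rho|D_0)$; the resolution is that the Bayes estimate depends only on the distribution it is averaged against, so the posterior plays the role of an effective prior and the bilinearity argument underlying Lemma 1 applies verbatim. Keeping the set-theoretic bookkeeping of ``support'' clean---the posterior support as the intersection of likelihood support with prior support---is the only genuinely technical ingredient.
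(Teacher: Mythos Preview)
Your proposal is correct and follows essentially the same route as the paper: assume admissibility, invoke Wald's theorem to obtain a single prior, observe that the matching-likelihood-support hypothesis forces the two posteriors to share the same support, and then derive a contradiction from Lemmas 1 and 2 because that common support can neither consist solely of pure states (else $\hat\rho(D_0)$ would be pure) nor contain a mixed state (else $\hat\rho(D_1)$ could not be pure). Your additional remarks---that Wald's theorem supplies one common prior for all datasets, and that Lemma 1 applies to posteriors as effective priors---make explicit points the paper leaves implicit, but the argument is the same.
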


\begin{proof}
The proof is by contradiction.  Assume $\hat\rho$ \emph{is} admissible.  Then there is a prior for which it is Bayes.  The posterior distributions corresponding to $D_0$ and $D_1$, which we denote ${\rm Post}(D_0)$ and ${\rm Post}(D_1)$, are different.  But because $D_0$ and $D_1$ were assumed to rule out the same states, ${\rm Post}(D_0)$ and ${\rm Post}(D_1)$ must have the same support.  If that support were restricted to pure states, then by Lemma 1, both $\hat\rho(D_0)$ and $\hat\rho(D_1)$ would have to be pure.  But if that support included a mixed state, then by Lemma 2, neither $\hat\rho(D_0)$ and $\hat\rho(D_1)$ could be pure.  Since the theorem assumed one is pure and one is mixed, we have a contradiction, and $\hat\rho$ is not admissible.
\end{proof}

\noindent\emph{Comment:}  The technical conditions in these theorems---e.g., that either $Pr(D_1|\sigma_2)$ or $Pr(D_2|\sigma_1)$ is nonzero (in Theorem 1), or that $D_1$ and $D_2$ are inconsistent with the same states (in Theorem 2)---are liable to obscure the main point.  These are \emph{not} very demanding conditions!  They hold in the vast majority of cases, and as noted above they are only necessary to exclude some obscure and uncommon experimental designs.  These conditions will hold whenever (1) each measurement performed has finitely many outcomes, and (2) each measurement is repeated $N\gg1$ times.  In that ubiquitous situation, it is sufficient to consider datasets in which each possible measurement outcome has occurred at least once, for which the technical conditions are guaranteed to hold.

\noindent{\bf Examples:} Theorems 1 and 2 are abstract, and provide relatively little intuition.  To illustrate how and why estimators like MLE are inadmissible, we now consider some simple single-qubit examples.  In these examples, we compare two estimators using Hilbert-Schmidt risk.  The first is constrained least-squares, which reports pure states and satisfies the conditions of both theorems and is therefore inadmissible for all the metrics discussed.  We demonstrate its inadmissibility explicitly by constructing a ``hedged'' estimator that outperforms it everywhere.

For our first example, we assume that the qubit's state lies in the X-Z plane.  This makes it a \emph{rebit}, and allows easy visualization of the 2-dimensional ``Bloch disk'' state space.  We represent each state $\rho$ by a Bloch vector $\vec{r} = (\qexpect{X},\qexpect{Z})$.  Since $\qexpect{Y}=0$ is known, we let $N$ samples of the unknown state be measured in each of the Pauli $X$ and $Z$ bases, yielding $n_X$ and $n_Z$ ``+1'' outcomes, and thus the empirical frequencies are
\begin{equation}
	\vec{f} = \left(\frac{2n_X-N}N,\frac{2n_Z-N}N\right).
\end{equation}
The constrained least-squares estimator is given by
\begin{equation}
	\vec{r} = \begin{cases}
	\vec{f} & \text{ if }  \|\vec{f}\|^2 < 1 \\
	\vec{f}/\|\vec{f}\| & \text{ otherwise} 
	\end{cases}, \label{eq:li}
\end{equation}
and is illustrated in Figure \ref{fig:rebit}(a).  For $N>1$, it satisfies the conditions of the theorems and is therefore (like MLE) inadmissible.  

To demonstrate this explicitly, we construct a strictly better estimator by \emph{hedging} \cite{blume-kohout_hedged_2010}.  Hedging describes any modification to the estimator that forces estimates to be full-rank (i.e., inside the state space rather than on its boundary).  Our ad-hoc hedged estimator is
\begin{equation}
	\vec{r}_h = \begin{cases}
	\vec{f} & \text{ if }  \|\vec{f}\|^2 < 1\\
	\sqrt{1-h}\vec{f}/\|\vec{f}\| & \text{ otherwise} 
	\end{cases}. \label{eq:hli}
\end{equation}
Unlike the general hedging procedure defined in \cite{blume-kohout_hedged_2010}, this simpler procedure \emph{only} affects pure estimates, but is still sufficient to outperform the unhedged estimator.  Its performance depends on the strength of the hedging parameter ($h$).  By evaluating the estimator's expected Hilbert-Schmidt risk at $\vec{r} = (0,1)$ and maximizing it (a straightforward calculation), we identify a good value of $h$,
\begin{equation} \label{eq:hopt}
	h=  \frac{1}{N}-\frac{1}{N^2}.
\end{equation}
The estimator defined by Eqs. \autoref{eq:hli} and \autoref{eq:hopt} is also shown in Fig. \ref{fig:rebit}(a).  We computed the Hilbert-Schmidt risk of both estimators for $N=4$ measurements of each observable. In Fig. \ref{fig:rebit}(b), we plot the difference, showing that the hedged estimator has lower risk \emph{everywhere}.

The rebit with just $N=4$ measurements is easy to calculate and visualize, but not very realistic.  So for our second example, we consider a full qubit with measurements of all three Pauli operators and a much wider range of $N$.  The 2-dimensional nature of this paper precludes visualizing the risk for all states in the Bloch ball. Happily, due to the symmetry of the Pauli measurements, it is sufficient to plot the risk difference along (1) a measurement axis ($Z$), and (2) one of the ``magic'' axes that point toward the corners of the Bloch cube. 

Since Hilbert-Schmidt risk scales as $1/N$ for every reasonable estimator, we consider the scaled risk difference,
\begin{equation}
N[R(\vec{r},\vec{r}_0)-R(\vec{r},\vec{r}_h)],
\end{equation}
where $\vec{r}$ is the true state, $\vec{r}_0$ is the constrained least-squares estimate, and $\vec{r}_h$ is the hedged estimate.  Figure \ref{fig:qubit_coin} shows the risk difference along the $Z$ axis for a wide range of $N$, while Fig. \ref{fig:qubit_magic} shows it for the axis pointing toward the point $(1,1,1)$.  In all cases, the hedged estimator dominates constrained least-squares, illustrating why (and how) the constrained least-squares estimator's habit of reporting pure states makes it inadmissible.

\begin{figure}[ht]
\centering
\includegraphics[width=1.0\columnwidth]{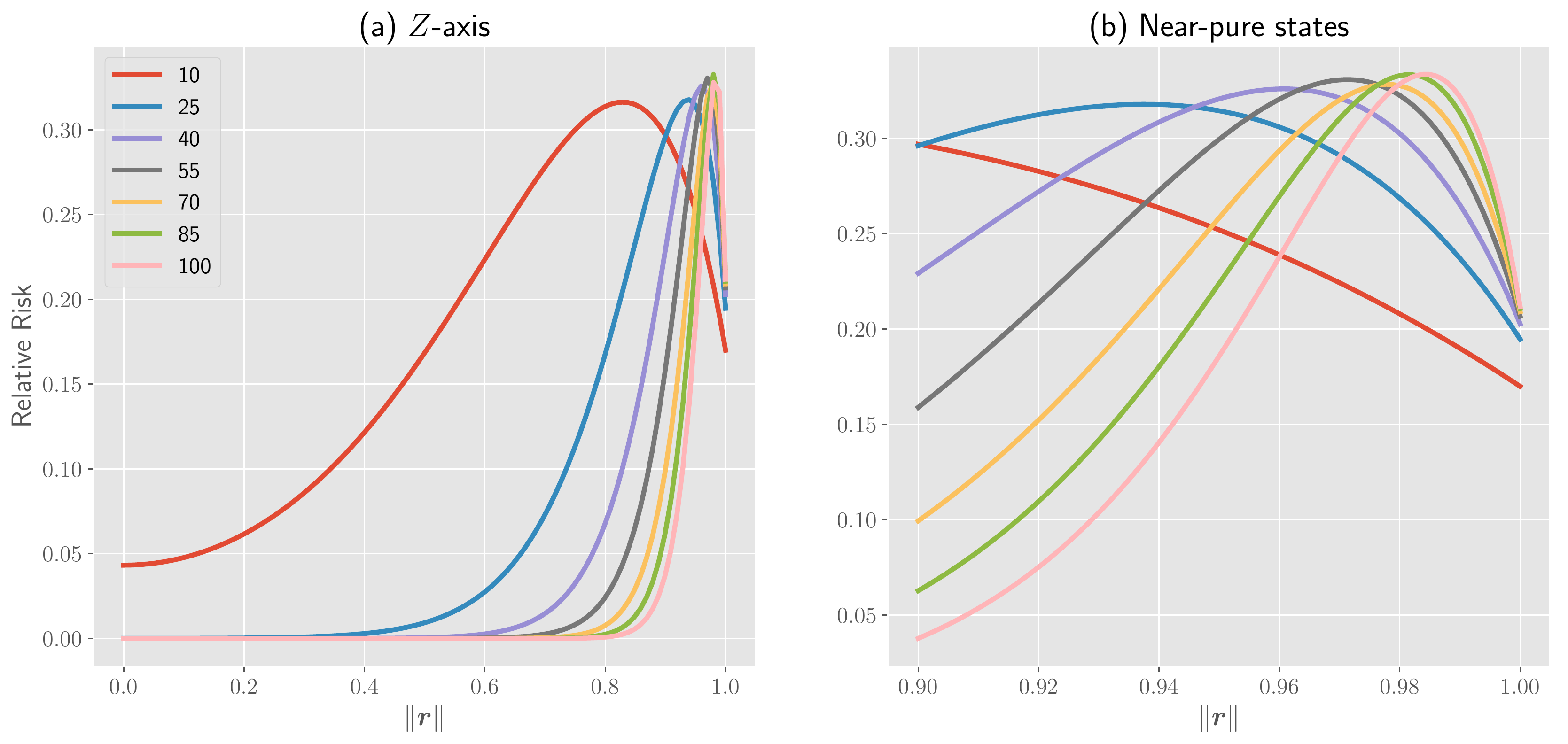}
\caption{
\label{fig:qubit_coin}
\textbf{The hedged estimator outperforms constrained least-squares along the $Z$ axis}.  The relative difference in Hilbert-Schmidt risk between the constrained least-squares and hedging estimators is plotted, for $N=10\ldots 100$ Pauli measurements on a single qubit.  Panel (a) shows the relative risk along the $Z$ axis, as a function of the radial coordinate $r$.  Panel (b) shows the same information, but zoomed in to the range $r=0.9\ldots 1$.  Importantly, the difference is positive everywhere, showing that the hedged estimator dominates its unhedged counterpart. }
\end{figure}

\begin{figure}[ht]
\centering
\includegraphics[width=1.0\columnwidth]{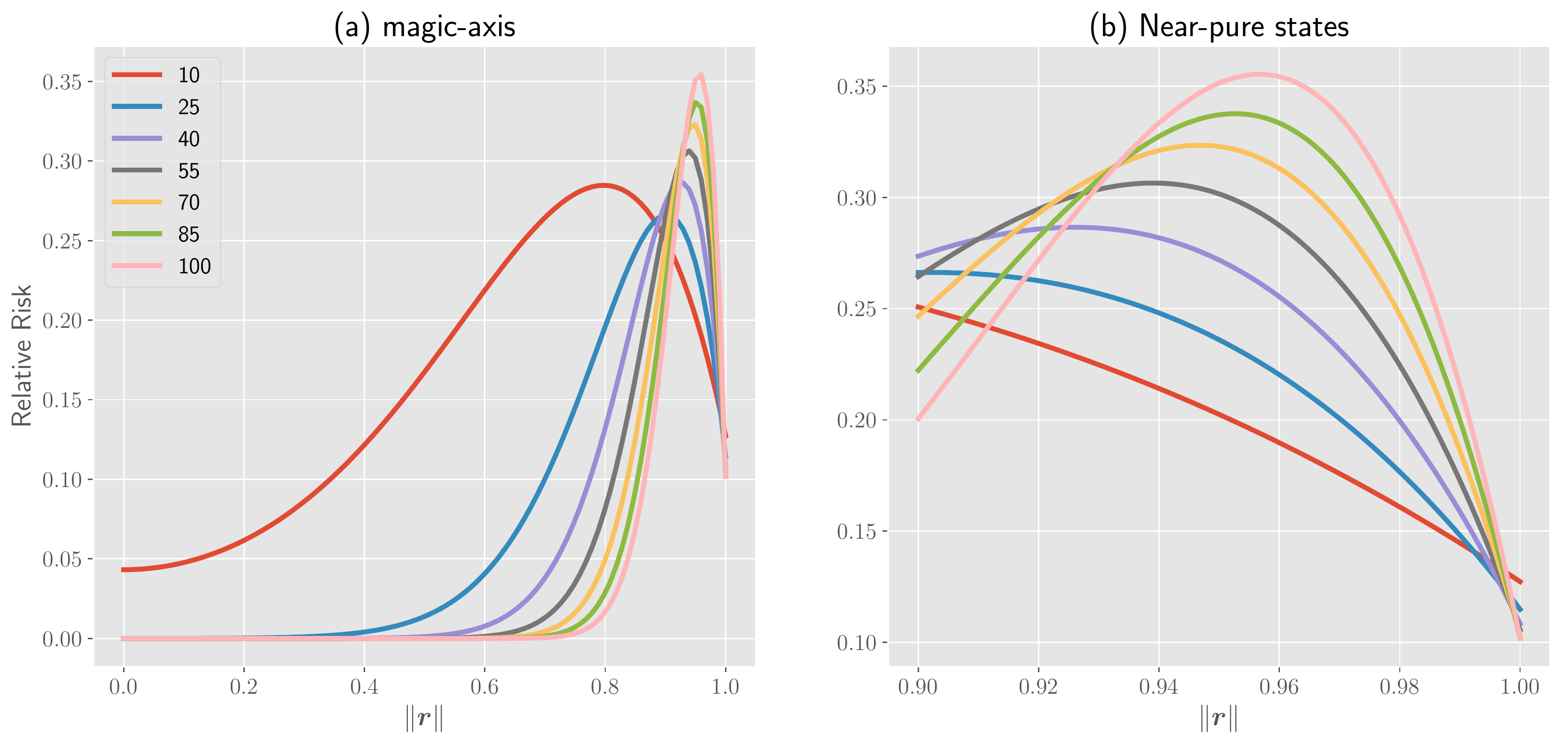}
\caption{
\label{fig:qubit_magic}
\textbf{The hedged estimator outperforms constrained least-squares along the $(1,1,1)$ axis}.  This figure shows exactly the same information as Fig. \ref{fig:qubit_coin}, but along an axis pointing to $\vec{r}=(1,1,1)$ rather than along the Z axis.}
\end{figure}

\noindent\textbf{Discussion:}  The point of this paper is simple:  neither MLE nor any other estimator that reports pure states can be optimally accurate except under very special and restrictive circumstances.  Those exceptional cases \emph{do} exist, although they are rare.  For example, MLE \emph{is} admissible when all the measurements commute.  It's hard to call this ``tomography'', though---it can't identify the quantum state uniquely, and reduces to classical probability estimation.  MLE can also be admissible when each distinct measurement is performed on just one sample.  And there are admissible estimators (not MLE) that \emph{always} report pure states, such as the Bayes estimator for fidelity risk and a prior supported only on pure states.  We leave open the question of whether a similar result holds for trace distance, which is probably the most popular metric not addressed by our theorems.

All of the numerical results shown here can be reproduced by substituting MLE for constrained least-squares.  However, there is no closed-form solution for MLE even on a single qubit, which precludes analytic understanding of those results.  We present the simpler case because it's easier to analyze and understand.

Finally, we want to emphasize that this does \emph{not} mean that MLE (or the other estimators like it) should be avoided.  MLE is reliable, easy, and statistically well-motivated.  It just isn't \emph{optimal} in any reasonable sense.  Its virtues are convenience and simplicity, \emph{not} any claim to uniqueness.  MLE is probably ``accurate enough'' for most purposes---but many other estimators may also be ``accurate enough'', and should not necessarily be thrown out of consideration in favor of MLE.


\begin{acknowledgments}
We thank Joshua Combes, Jonathan Gross, and Marco Tomamichel for helpful discussions. CF was supported by the Australian Research Council Grant No. DE170100421. Sandia National Laboratories is a multimission laboratory managed and operated by National Technology and Engineering Solutions of Sandia, LLC, a wholly owned subsidiary of Honeywell International, Inc., for the U.S. Department of Energy's National Nuclear Security Administration under contract DE-NA0003525.  The views expressed in the article do not necessarily represent the views of the U.S. Department of Energy or the United States Government.
\end{acknowledgments}

\bibliography{tomo}

\end{document}